\title{Function-Coherent Gambles with Non-Additive Sequential Dynamics}
\author[1]{Gregory Wheeler}
\affil[1]{Frankfurt School of Finance \& Management\\Germany}
\begin{document}
\maketitle


\begin{abstract}
The desirable gambles framework provides a rigorous foundation for imprecise probability theory but relies heavily on linear utility via its coherence axioms. In our related work, we introduced function-coherent gambles to accommodate non-linear utility. However, when repeated gambles are played over time---especially in intertemporal choice where rewards compound multiplicatively---the standard additive combination axiom fails to capture the appropriate long-run evaluation. In this paper we extend the framework by relaxing the additive combination axiom and introducing a nonlinear combination operator that effectively  aggregates repeated gambles in the log-domain. This operator preserves the time-average (geometric) growth rate and addresses the ergodicity problem. We prove the key algebraic properties of the operator, discuss its impact on coherence, risk assessment, and representation, and provide a series of illustrative examples. Our approach bridges the gap between expectation values and time averages and unifies normative theory with empirically observed non-stationary reward dynamics.
\end{abstract}

\begin{keywords}
Desirability, non-linear utility, ergodicity, intertemporal choice, non-additive dynamics, function-coherent gambles, risk measures.
\end{keywords}

\section{Introduction}
The desirable gambles framework \cite{Williams:1975,Walley:2000,DeCoomanQuaeghebeur:2012,LowerPrevisions,Wheeler:2022,DeBock:2023,deCoomanVanCampDeBock:2023} is the elemental core of a general account of imprecise probabilities, subsuming lower  \cite{Walley:1991,LowerPrevisions} and linear previsions \cite{DeFinetti:1974} as special cases. Yet, its coherence axioms (in particular, the combination axiom) encode a commitment to linear utility. Although  analytically convenient, linear utility fails to capture phenomena observed in intertemporal choice and repeated gambles where rewards are compounded multiplicatively. In such contexts, the arithmetic sum of outcomes is not the appropriate aggregator; rather, the long-run performance is determined by the geometric mean or time-average growth rate.

Previous work \cite{Wheeler:2021.isipta} introduced \emph{discounted desirable games} to relax the linearity assumption while preserving essential rationality conditions \cite{Wheeler:2025.func}. Here, we extend that approach by making three primary contributions. First, we develop a novel combination operator that preserves coherence while accommodating non-linear utility, addressing a fundamental limitation in the standard desirable gambles framework. This operator, defined as $f \oplus g = (1 + f)(1 + g) - 1$, naturally captures the multiplicative dynamics of compound growth while maintaining essential rationality properties.

Second, we establish necessary and sufficient conditions for well-behaved combination operators through a representation theorem. Our characterization reveals that the logarithmic transformation is not merely one among many possible choices, but emerges naturally as the unique operator that simultaneously preserves function coherence in the transformed space, maintains the time-average geometric growth rate, and ensures additivity of sequential risks in the log-domain.

Third, we unify several seemingly disparate concepts within a coherent mathematical framework. We demonstrate how the ergodicity problem in multiplicative dynamics, function-coherent representations of risk preferences, time-average growth optimization, and non-stationary reward processes are fundamentally connected through the structure of our combination operator and its induced risk measure. This unification provides new insights into the relationship between expectation values and time averages, bridging the gap between normative decision theory and empirically observed behavior in dynamic choice situations.

These theoretical results are complemented by practical applications, particularly in portfolio management and long-horizon decision problems, where we show how our framework naturally captures phenomena such as volatility drag and the asymmetric impact of gains and losses---effects that often require ad hoc adjustments in traditional approaches.

\section{Preliminaries and Motivation}

The desirable gambles framework consists of axioms for constructing a coherent set $\mathbb{D}$ of bounded gambles. For bounded gambles $f, g$ and positive real number $\lambda \geq 0$, a set of bounded gambles $\mathbb{D}$ is coherent when satisfying:

\begin{itemize}
\item[A1.]  If $f < 0$, then $f \not\in \mathbb{D}$ \hfill ({\bf Avoid partial loss})
\item[A2.] If $f \geq 0$, then $f \in \mathbb{D}$  \hfill ({\bf Accept partial gain}) 
\item[A3.] If $f \in \mathbb{D}$, then $\lambda f \in \mathbb{D}$ \hfill ({\bf Pos.~Scale Invariance})
\item[A4.]  If $f \in \mathbb{D}$ and $g \in \mathbb{D}$, then $f + g \in \mathbb{D}$  \hfill ({\bf Combination})
\end{itemize}

Axioms A1 and A2 express core rationality conditions, while A3 and A4 are closure operations encoding linear utility. Together they define a convex cone containing all conic combinations of its elements:


\begin{equation}
\label{eq:conic_hull}
\mathsf{cone}(\mathbb{D}) := \left\{ \sum_{i =1}^{n} \lambda_i f_i :  f_i\in \mathbb{D}, i=1,\ldots,n,\; \lambda_i \geq 0   \right\}
\end{equation}

\subsection{The Sequential Decision Problem}

While this framework elegantly captures many aspects of decision making under uncertainty, it encounters limitations when applied to sequential decisions, particularly those involving multiplicative dynamics. Consider two key cases:

\begin{enumerate}
\item \emph{Additive Accumulation}: For a gamble that modifies wealth by adding $f(\omega)$ in state $\omega$, repeated application over $n$ periods yields total change:
\[
\sum_{i=1}^n f(\omega_i)
\]
This aligns naturally with axiom A4.

\item \emph{Multiplicative Growth}: For a gamble that modifies wealth by factor $(1+f(\omega))$, the $n$-period wealth evolution is:
\[
w_n = w_0\prod_{i=1}^n (1+f(\omega_i))
\]
Here, the additive combination axiom fails to capture the compound growth structure.
\end{enumerate}

\begin{example}
\label{example:seq-gamble}
To see the difference between an arithmetic‐mean return and long‐run (time‐average) growth, 
consider two investment options:

\begin{itemize}
\item Option A: Returns +50\% or -40\% with equal probability each period.
\item Option B: Consistently returns +5\% every period.
\end{itemize}

A single‐period, \emph{arithmetic} expected return analysis would tell us:
\[
  \mathbb{E}[\text{Option A}] = 0.5 \times (+50\%) + 0.5 \times (-40\%) \;=\; +5\%,
\]
the same as Option B's certain +5\%.  From this perspective, one might think the two 
investments are ``equally good.'' 

However, when invested repeatedly, wealth compounds.  Starting with \$100 in Option A:

\begin{itemize}
\item \textbf{Up then Down:} $100 \times 1.5 \times 0.6 = \$90$
\item \textbf{Down then Up:} $100 \times 0.6 \times 1.5 = \$90$
\item \textbf{Up then Up:} $100 \times 1.5 \times 1.5 = \$225$
\item \textbf{Down then Down:} $100 \times 0.6 \times 0.6 = \$36$
\end{itemize}

Meanwhile, investing repeatedly in Option B (at +5\% each period) grows \$100 to 
\$110.25 after two periods:
\[
  100 \;\times\; 1.05 \;\times\; 1.05 \;=\; \$110.25.
\]
Notice that in \emph{three out of these four} equally likely scenarios for Option A, 
the final wealth is less than \$110.25.  

This discrepancy arises because the \emph{expected value} of Option A for a single 
period (+5\%) does \emph{not} reflect how actual wealth evolves through time when 
gains and losses compound.  Indeed, the average (arithmetic) return overlooks 
the fact that recovering from a 40\% loss requires a greater‐than‐40\% gain. 
When decisions must be repeated many times, the \emph{time‐average} growth rate 
often diverges from the simple arithmetic mean of returns, which can lead 
to misjudgments if one only uses traditional expected‐value analysis.
\end{example}

This divergence between ensemble averages (what we expect across many parallel universes) and time averages (what a single investor actually experiences over time) is the essence of the ergodicity problem \cite{Birkhoff:1931,Peters:2016}. In multiplicative processes like investment returns, the sequence and path of outcomes matter fundamentally. The arithmetic average of returns fails to capture this path dependence, leading to potentially misleading evaluations of long-term growth prospects.

This distinction becomes crucial when evaluating long-run performance. For a gamble $f$ over $|\Omega|=m$ states, let the reward vector $\mathbf{x}_f = (x_1, \ldots, x_m)$ in $\mathbb{R}^m$ represent state-contingent outcomes. Under linear utility:

\begin{equation}
\label{eq:lin-U}
u_{lin}(\mathbf{x}) = \mathbf{1}\mathbf{x} = \mathbf{x}
\end{equation}

However, when outcomes compound multiplicatively, linear utility fails to capture the asymmetric impact of gains and losses on long-term growth.  Figure~\ref{fig:multi_2a} contrasts the ensemble average, $\mathbb{E}_p(f)$ under additive accumulation, with the time-average, multiplicative growth rate, $\mathbb{E}_t(f)$. The figure makes clear that, for multiplicative processes, these two measures can diverge substantially. Relying solely on the ensemble average can thus mislead decision makers about the true long-run performance of a gamble.

\begin{figure}[t]
\centering
\includegraphics[width=\columnwidth]{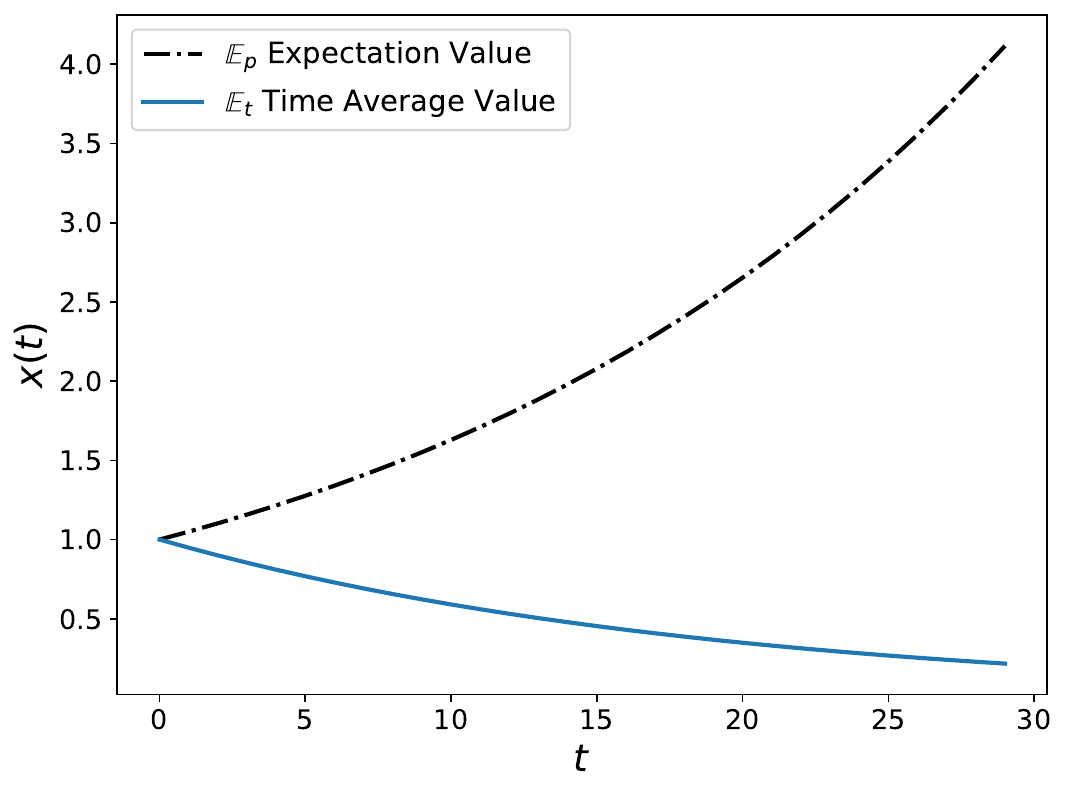}
\caption{Comparison of ensemble averaging versus time averaging of Option A in Example~\ref{example:seq-gamble} over a fixed period of 30 iterations. The dashed line represents the expectation value $\mathbb{E}_p(f)$ under additive accumulation, while the solid line depicts the time-average growth rate, $\mathbb{E}_t(f)$, under multiplicative growth. The divergence between these measures highlights the ergodicity problem in multiplicative processes.}
\label{fig:multi_2a}
\end{figure}

This motivates consideration of non-linear utility functions, starting with discounted utility  \cite{Wheeler:2021.isipta}:

\begin{definition}[Discounted Utility]
\begin{equation}
\label{eq:util_original}
u(x,\alpha) := \frac{x^{1-\alpha}-\alpha}{1-\alpha}, \quad \alpha \in [0,1),\; x > 0.
\end{equation}
\end{definition}

This discounted utility function takes a positive scalar reward $x$ and discounts its desirability to degree $\alpha$. When $\alpha = 0$, we recover linear utility $u(x,0) = x$. As $\alpha$ approaches 1, the function increasingly resembles logarithmic utility, providing a bridge to the multiplicative case.

\subsection{Challenge of Sequential Decisions}

The framework of desirable gambles with linear utility faces three fundamental challenges in sequential decision contexts:

\begin{enumerate}
\item \emph{Growth Rate Evaluation}: The arithmetic mean of returns fails to capture the geometric growth rate that determines long-run performance under multiplicative dynamics.

\item \emph{Risk Assessment}: Linear utility implies symmetric treatment of gains and losses, but multiplicative processes are inherently asymmetric—a 50\% loss requires a 100\% gain to recover.

\item \emph{Dynamic Consistency}: The standard combination axiom may lead to dynamically inconsistent preferences when applied to sequences of multiplicative gambles.
\end{enumerate}

These challenges motivate our development of a more general framework that can accommodate both additive and multiplicative dynamics while preserving the essential rationality properties of desirable gambles. In the following sections, we first introduce function-coherent gambles \cite{Wheeler:2025.func} to handle non-linear utility, then develop a theory of combination operators suitable for sequential decisions.

\section{Function-Coherent Gambles}

Having established why standard additive combination fails to capture multiplicative dynamics, we now introduce a more general framework that can accommodate non-linear utility while preserving essential rationality properties \cite{Wheeler:2025.func}. The key insight is to relax the linear utility assumption embedded in traditional desirable gambles while maintaining a convex structure in an appropriately transformed space.

\subsection{Basic Framework}

Following \cite{Wheeler:2025.func}, we start by generalizing the coherence axioms to accommodate non-linear utility functions. Given a strictly increasing, continuous utility function $u: X \to \mathbb{R}$ normalized by $u(0)=0$, a set of gambles \( \mathbb{D} \subset X \) is said to be \emph{function-coherent} if it satisfies the following axioms:

\begin{itemize}
\item[(F1)] \textbf{Avoid Losses:} If $f<0$ (a sure loss), then $f\notin \mathbb{D}$.
\item[(F2)] \textbf{Monotonicity:} If $f \ge g$ (pointwise) and $g\in \mathbb{D}$, then $f\in \mathbb{D}$.
\item[(F3)] \textbf{$u$-Convexity:} For any $f,g\in \mathbb{D}$ and any nonnegative scalars $\lambda,\mu$ for which 
\[
h = u^{-1}\bigl(\lambda\, u(f)+\mu\, u(g)\bigr)
\]
is well-defined, we have $h\in \mathbb{D}$.
\end{itemize}

The acceptance set is then naturally defined as
\[
\mathbb{D} = \{ f\in X: u(f)\ge 0 \}.
\]

Observe that if we choose $u$ to be the identity function, these axioms reduce to the classical desirability axioms.  Moreover, by defining the transformed acceptance set
\[
U(\mathbb{D}) = \{ u(f) : f \in \mathbb{D} \}.
\]
we see that $U(\mathbb{D})$ forms a convex cone. Thus, even though $u$ may be nonlinear, the mapping into the $u$-space preserves the essential convex (and hence linear) structure required for coherence.

\subsection{Representation Theorem}

Under appropriate regularity conditions, this convex structure leads to a  representation theorem that characterizes function-coherent sets of gambles. Following \cite{Wheeler:2025.func}, we require:
\begin{enumerate}[label=(\roman*)]
\item $U(\mathbb{D})$ is closed in the appropriate topology
\item $U(\mathbb{D})$ has nonempty interior
\end{enumerate}

\begin{theorem}[Representation Theorem]
\label{thm:rep}
Under the conditions above, there exists a nonzero continuous linear functional $\ell: V\to\mathbb{R}$, unique up to multiplication by a positive scalar, such that for every gamble $f\in X$,
\[
f\in \mathbb{D} \quad \Longleftrightarrow \quad \ell\bigl(u(f)\bigr) \ge 0.
\]
Equivalently, defining the evaluation functional
\[
\rho(f):=\ell\bigl(u(f)\bigr),
\]
we have
\[
f\in \mathbb{D} \quad \Longleftrightarrow \quad \rho(f)\ge 0.
\]
\end{theorem}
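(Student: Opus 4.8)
The plan is to transport the entire problem into the $u$-transformed space $V$ and then run a supporting-hyperplane argument on the convex cone $K := U(\mathbb{D})$. Because $u$ is strictly increasing and continuous, it is a bijection onto its image, so $f \in \mathbb{D}$ if and only if $u(f) \in K$; the representation we seek is therefore equivalent to exhibiting a single linear functional that cuts out $K$. This reduction is the crux of the approach: it converts a statement about a nonlinear acceptance set into a statement about a closed convex cone in a linear space, where the standard machinery of convex separation applies directly.

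First I would verify that $K$ carries exactly the structure the separation theorems require. Axiom (F3) gives closure under nonnegative linear combinations, hence $K$ is a convex cone; regularity condition (i) supplies closedness and (ii) supplies nonempty interior. Axiom (F1), forbidding sure losses, guarantees that $K$ is \emph{proper}, i.e. $K \neq V$, so that the apex $0 = u(0)$ lies on the boundary $\partial K$ rather than in its interior. These facts — closed, convex, proper, with nonempty interior and $0 \in \partial K$ — are precisely the hypotheses of the supporting-hyperplane form of Hahn--Banach.

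Next I would apply the supporting hyperplane theorem at the boundary point $0$ to obtain a nonzero continuous linear functional $\ell: V \to \mathbb{R}$ with $\ell(v) \ge 0$ for all $v \in K$. This yields one direction of the biconditional at once: $f \in \mathbb{D} \Rightarrow \ell(u(f)) \ge 0$. The reverse inclusion $\{v : \ell(v) \ge 0\} \subseteq K$ is the main obstacle, since a generic closed convex cone is an intersection of \emph{many} half-spaces and need not coincide with a single one; without further input the argument delivers only $K \subseteq \{v : \ell(v) \ge 0\}$. I expect to close the gap by using maximality of the coherent set $\mathbb{D}$ (equivalently, the regularity assumption pinning $K$ down as the half-space supported by $\ell$): if some $v_0$ with $\ell(v_0) \ge 0$ were excluded from $K$, then closedness plus nonempty interior would let me strictly separate $v_0$ from $K$ by a second nonnegative functional, producing a strictly larger coherent extension and contradicting maximality. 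Combining both inclusions gives $K = \{v : \ell(v) \ge 0\}$, and setting $\rho(f) := \ell(u(f))$ yields the stated equivalence.

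Finally, uniqueness up to a positive scalar is routine: any two nonzero functionals defining the same half-space $\{v : \ell(v) \ge 0\}$ share the same kernel hyperplane and agree in sign on $K$, so one is a positive multiple of the other. The genuinely delicate points throughout are topological — ensuring the interior is nonempty in the relevant topology on $V$ and that the separating functional is continuous rather than merely linear — which is exactly what conditions (i) and (ii) are imported to secure.
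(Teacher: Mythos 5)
Your strategy---transport everything into the $u$-transformed space, observe that $K := U(\mathbb{D})$ is a closed convex cone with nonempty interior, support it at the apex $0$ via Hahn--Banach, and pull back along the bijection $u$---is the same one the paper relies on. (Strictly speaking the paper never proves Theorem~\ref{thm:rep} in situ, importing it from prior work; but its proof of the later Function-Coherence Preservation theorem is exactly this three-step argument with $u = \log(1+\cdot)$, so in approach you and the paper coincide.)

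The genuine gap is in your patch for the reverse inclusion, and to your credit you half-identified it yourself. You are right that the supporting-hyperplane theorem yields only $K \subseteq \{v : \ell(v) \ge 0\}$, and that the equality is the real content of the theorem; the paper's own Step 2 simply asserts this equality as if Hahn--Banach delivered it, which it does not. But your proposed repair invokes \emph{maximality} of $\mathbb{D}$, which is nowhere among the hypotheses: the theorem assumes only (F1)--(F3) plus (i) closedness and (ii) nonempty interior of $U(\mathbb{D})$. Under those hypotheses alone the claim is false, so no argument can close the gap. Concretely, take $\Omega$ finite with $|\Omega| \ge 2$, $u$ the identity, and $\mathbb{D} = \{f : f(\omega) \ge 0 \text{ for all } \omega\}$---which is the paper's own ``naturally defined'' acceptance set $\{f : u(f) \ge 0\}$. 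Then $K$ is the positive orthant: a closed convex cone with nonempty interior satisfying (F1)--(F3), yet it is the intersection of $|\Omega|$ distinct half-spaces and equals no single one, so no functional $\ell$ with the stated biconditional exists. To make your extension argument work you must add maximality (or an equivalent completeness axiom) as an explicit hypothesis, and you would still owe the verification that adjoining the strictly separated point $v_0$ and re-closing under the axioms yields a coherent, loss-avoiding superset; absent that, what you have actually proved is only the forward implication $f \in \mathbb{D} \Rightarrow \ell(u(f)) \ge 0$. Your uniqueness argument (two functionals cutting out the same half-space are positive multiples of one another) is fine once the equality is granted.
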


This representation theorem reveals that—even though gambles are evaluated through a nonlinear function $u$—the acceptance criterion can be represented linearly in the transformed space. When $\ell$ corresponds to integration with respect to a probability measure, we recover representations of the form
\[
f\in \mathbb{D} \quad \Longleftrightarrow \quad \mathbb{E}_p\bigl[u(f)\bigr]\ge 0,
\]
providing a bridge to classical expected utility theory while maintaining the flexibility of non-linear evaluation.

\subsection{Discussion and Domain Restriction}

A crucial aspect of the {\em function-coherence} framework is the domain restriction inherent in the utility function $u$. All gambles must lie in the domain $X$ where $u$ is well-defined. This is not merely a technical constraint but reflects the reality that not all possible gambles may be meaningfully evaluated under a given utility function.

For example, with logarithmic utility $u(x) = \log(1+x)$, the domain naturally excludes gambles that could lead to negative wealth. More generally, the domain restriction encodes structural assumptions about which gambles are economically meaningful in a given context.

The upward closure property (F2) remains well-behaved under this domain restriction.  Again from \cite{Wheeler:2025.func}, we have:

\begin{theorem}[Upward Closure Under Domain Restriction]
Let $f$ be an acceptable gamble (i.e., $f\in \mathbb{D}$) and let $g$ be any gamble in $X$ such that $g(s) \ge f(s)$ for every state $s$. Then $g\in \mathbb{D}$.
\end{theorem}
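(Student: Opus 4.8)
The plan is to derive the statement directly from the definition of the acceptance set $\mathbb{D}=\{f\in X: u(f)\ge 0\}$, using the strict monotonicity of the utility function $u$ as the sole engine of the argument. Since the conclusion ``$g\in\mathbb{D}$'' requires both that $g$ lie in the domain and that its transformed version be accepted, I would organize the proof around transporting the pointwise inequality $g\ge f$ through $u$ and then through the acceptance criterion.

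First I would invoke the hypothesis $g\in X$ to guarantee that $u(g)$ is well-defined; this is the step where the domain restriction does its essential work, since outside $X$ the composite $u\circ g$ need not even make sense. Second, because $u$ is strictly increasing and $g(s)\ge f(s)$ for every state $s$, I would conclude $u(g(s))\ge u(f(s))$ for every $s$, that is, $u(g)\ge u(f)$ pointwise. Third, since $f\in\mathbb{D}$ gives $u(f)\ge 0$ pointwise, transitivity yields $u(g)\ge u(f)\ge 0$, whence $g\in\mathbb{D}$ by definition.

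An equivalent route runs through the Representation Theorem~\ref{thm:rep}: membership is $\rho(f)=\ell\bigl(u(f)\bigr)\ge 0$, and the pointwise bound $u(g)\ge u(f)$ together with the fact that the representing functional $\ell$ is nonnegative on the positive cone of $U(\mathbb{D})$ gives $\rho(g)=\ell\bigl(u(g)\bigr)\ge \ell\bigl(u(f)\bigr)=\rho(f)\ge 0$. The positivity of $\ell$ is the only auxiliary fact one must record here, and it follows from the coherence structure of $U(\mathbb{D})$ as a convex cone containing the nonnegative elements.

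The statement is close to a restatement of axiom (F2), so I do not expect a deep obstacle; the one point that genuinely requires care—and the reason the theorem is worth stating separately—is the domain restriction. The monotonicity chain is vacuous unless $u(g)$ exists, so the substantive content is the observation that the hypothesis $g\in X$ is exactly what is needed to certify that upward closure survives when $u$ is defined only on a proper subdomain. I would therefore emphasize that $g\in X$ is not an innocuous regularity assumption but the precise condition that makes the transported inequality meaningful.
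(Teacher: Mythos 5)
Your primary argument is correct and is exactly the intended one: the paper states this theorem without proof (importing it from \cite{Wheeler:2025.func}), and its content is precisely axiom (F2) combined with the definitional characterization $\mathbb{D}=\{f\in X: u(f)\ge 0\}$, which your chain $u(g)\ge u(f)\ge 0$ via strict monotonicity of $u$ delivers, with $g\in X$ doing the work of keeping $u(g)$ well-defined. Your closing observation---that the domain-membership hypothesis is the real point of the theorem, not the monotonicity chain---matches the paper's stated purpose for isolating this result.
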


This framework sets the stage for developing non-additive combination rules that can properly capture multiplicative dynamics while maintaining coherence. The representation theorem will prove crucial in establishing that these new combination rules preserve the essential properties we desire.

\section{Non-Additive Sequential Dynamics}

Building on the function-coherent framework, we now develop a theory of combination operators that properly captures multiplicative dynamics while preserving coherence. The standard additive combination axiom (A4) from \cite{Williams:1975,LowerPrevisions} guarantees that if two gambles are acceptable, their arithmetic sum is also acceptable. While this aligns with linear utility, many real-world decision problems—particularly in domains like long-term investments studied in \cite{Wheeler:2021.isipta}—exhibit multiplicative dynamics where wealth evolves through compounding rather than simple addition.

To illustrate, consider a gamble $f$ that updates wealth by a factor of $1+f(\omega)$ in state $\omega$. After $n$ independent repetitions, wealth evolves as
\[
w' = w \prod_{i=1}^{n} \bigl(1+f(\omega_i)\bigr).
\]
Following \cite{Peters:2016,Peters:2019,Wheeler:2021.isipta}, we observe that the long-run performance of such a process is determined by the geometric mean, or equivalently, by the time-average of logarithmic returns:
\[
\frac{1}{n}\sum_{i=1}^{n}\log\bigl(1+f(\omega_i)\bigr).
\]

\subsection{The Nonlinear Combination Operator}

Motivated by this observation and building on \cite{Wheeler:2021.isipta}, we introduce a \emph{nonlinear combination operator} that respects multiplicative compounding by working in the logarithmic domain. 
Let $f$ and $g$ be gambles defined with respect to a state space $\Omega$ with the property that
\[ f(\omega) > -1 \quad \mathrm{and} \quad g(\omega) > -1 \quad \mathrm{for \ all} \ \omega \in \Omega \]
This condition ensures that the logarithmic transformation is well defined.  Next define the transformation function
\[
\phi(x)=\log(1+x),\quad x>-1
\] 
with its  inverse given by  
\[
\phi^{-1}(y)=e^y-1.
\]
Then, for any two gambles $f$ and $g$, we define their sequential (pointwise) combination:
\begin{itemize}
    \item[(F4)] \textbf{Nonlinear Combination:} If $f \in \mathbb{D}$ and $g \in \mathbb{D}$, then their nonlinear combination
    \begin{align*}
    f \oplus g(\omega) &:=  \phi^{-1}\Bigl(\phi\bigl(f(\omega)\bigr) + \phi\bigl(g(\omega)\bigr)\Bigr) \\
    		   &= (1+f(\omega))\,(1+g(\omega))-1,
    \end{align*}
    is also in $\mathbb{D}$.
\end{itemize}

A crucial feature of this operator is that it converts multiplicative effects into an additive structure in the log-domain.  Define the {\em log-return} transformation as
\[L(f)=\log(1+f)\]
Then, by construction we have
\[
L(f \oplus g) = L(f) + L(g).
\]
This additive property  mirrors the behavior of multiplicative processes when expressed in logarithmic terms \cite{Wheeler:2022}.

\begin{theorem}[Log-Domain Additivity]
Let \(f\) and \(g\) be gambles satisfying \(f(\omega),\, g(\omega) > -1\) for all \(\omega\). Then, for every \(\omega \in \Omega\),
\[
\log\Bigl(1+(f \oplus g)(\omega)\Bigr) = \log\bigl(1+f(\omega)\bigr) + \log\bigl(1+g(\omega)\bigr).
\]
\end{theorem}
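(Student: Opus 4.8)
The plan is to prove the identity directly by substituting the definition of the combination operator and applying the logarithm. This is a pointwise identity, so I would fix an arbitrary state $\omega \in \Omega$ and verify the equation there; since $\omega$ is arbitrary, the result holds for all states.

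First I would recall the definition of the operator established in axiom (F4), namely
\[
(f \oplus g)(\omega) = (1+f(\omega))\,(1+g(\omega)) - 1.
\]
The key observation is that adding $1$ to both sides cancels the trailing $-1$, yielding
\[
1 + (f \oplus g)(\omega) = (1+f(\omega))\,(1+g(\omega)).
\]
This is the crux of the computation: the operator is engineered precisely so that the argument of the logarithm factors as a product.

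Next I would apply $\log$ to both sides and invoke the fundamental homomorphism property of the logarithm, $\log(ab) = \log a + \log b$, which converts the product into a sum:
\[
\log\bigl(1 + (f \oplus g)(\omega)\bigr) = \log\bigl((1+f(\omega))\,(1+g(\omega))\bigr) = \log\bigl(1+f(\omega)\bigr) + \log\bigl(1+g(\omega)\bigr).
\]
This is exactly the claimed identity.

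The only genuine subtlety—and the single point that needs care rather than being a real obstacle—is ensuring the logarithms are well-defined. The hypothesis $f(\omega), g(\omega) > -1$ guarantees $1 + f(\omega) > 0$ and $1 + g(\omega) > 0$, so both factors are strictly positive and the logarithm of each is defined. It then follows that their product $(1+f(\omega))(1+g(\omega))$ is strictly positive, so $1 + (f\oplus g)(\omega) > 0$ and the left-hand logarithm is also defined. Verifying this domain condition is the main thing to check; once it is in place, the proof reduces to the elementary algebraic and logarithmic manipulations above.
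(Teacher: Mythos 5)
Your proof is correct and follows essentially the same route as the paper's: substitute the definition $(f\oplus g)(\omega) = (1+f(\omega))(1+g(\omega))-1$, add $1$ to both sides, take logarithms, and apply the product rule. Your explicit check that the hypothesis $f(\omega), g(\omega) > -1$ makes all three logarithms well-defined is a small addition the paper leaves implicit, but it does not change the argument.
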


\begin{proof}
By definition, we have:
\[
(f \oplus g)(\omega) = (1+f(\omega))(1+g(\omega))-1.
\]
Then,
\[
1 + (f \oplus g)(\omega) = (1+f(\omega))(1+g(\omega)).
\]
Taking logarithms on both sides gives:
\[
\log\Bigl(1+(f \oplus g)(\omega)\Bigr) = \log\Bigl((1+f(\omega))(1+g(\omega))\Bigr).
\]
Finally, using the logarithm product rule, we have
\[
\log\Bigl((1+f(\omega))(1+g(\omega))\Bigr) = \log\bigl(1+f(\omega)\bigr) + \log\bigl(1+g(\omega)\bigr).
\]
\end{proof}

The operator \(\oplus\) is particularly useful in scenarios where wealth or rewards compound over time. Under multiplicative dynamics, the net effect of two sequential gambles is not given by the arithmetic sum but by their product (adjusted via the transformation \(\phi\) and its inverse). The property
\[
L(f \oplus g) = L(f) + L(g)
\]
ensures that the log-returns add up, thus providing a natural framework for analyzing long-run growth rates, risk assessments, and ergodic properties in dynamic settings.

\subsection{Preservation of Function-Coherence}

An important question is whether replacing the standard additive combination axiom with (F4) maintains the desirable properties established in our representation theorem. The following result shows that it does, while shifting the analysis to the space of log-returns.

In our setting the acceptance set $\mathbb{D} \subset X$ is defined with respect to a space of gambles $X$ defined on a state space $\Omega$ with the restriction that, for every $f\in X$ and every $\omega\in\Omega$, $f(\omega) > -1$.  (This restriction ensures that logarithms are well defined.)  Further, suppose $\mathbb{D}$ satisfies the function-coherence axioms F1 -- F4.

We now state the preservation theorem.

\begin{theorem}[Function-Coherence Preservation]
Let \(X\) be a space of gambles on \(\Omega\) with \(f(\omega)>-1\) for all \(f\in X\) and \(\omega\in\Omega\). Suppose the acceptance set \(\mathbb{D} \subset X\) satisfies axioms (F1)–(F3) and the nonlinear combination axiom (F4). Then there exists a continuous linear functional \(\ell\) on a suitable vector space \(V\) (of log-returns) such that for every gamble \(f\in X\),
\[
f \in \mathbb{D} \quad \Longleftrightarrow \quad \ell\bigl(L(f)\bigr) \ge 0.
\]
\end{theorem}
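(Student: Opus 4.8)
The plan is to reduce the claim to the Representation Theorem (Theorem~\ref{thm:rep}) by specializing the utility function to the log-return map $u = L$, where $L(f) = \log(1+f)$. The domain restriction $f(\omega) > -1$ guarantees that $L$ is a well-defined, strictly increasing, continuous bijection from $X$ onto its image $V := \{L(f) : f \in X\}$, with inverse $L^{-1}(y) = e^y - 1$ and normalization $L(0) = 0$. Under this identification, membership in $\mathbb{D}$ is equivalent to membership of the log-return $L(f)$ in the transformed set $L(\mathbb{D})$, so it suffices to produce a nonzero continuous linear functional $\ell$ on $V$ that supports $L(\mathbb{D})$ in the sense $L(f) \in L(\mathbb{D}) \iff \ell(L(f)) \ge 0$. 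The work therefore concentrates on verifying that $L(\mathbb{D})$ satisfies the hypotheses of Theorem~\ref{thm:rep}, the crucial one being that $L(\mathbb{D})$ is a convex cone.

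First I would establish additive closure. If $f, g \in \mathbb{D}$, then axiom (F4) gives $f \oplus g \in \mathbb{D}$, and the Log-Domain Additivity theorem yields $L(f \oplus g) = L(f) + L(g)$; hence $L(f) + L(g) \in L(\mathbb{D})$, so $L(\mathbb{D})$ is closed under vector addition. Next I would establish positive homogeneity from axiom (F3): taking the $u$-convex combination with $u = L$, weight $\mu = 0$, and scalar $\lambda \ge 0$ shows that $h = L^{-1}(\lambda L(f)) \in \mathbb{D}$, whence $\lambda L(f) = L(h) \in L(\mathbb{D})$. I would note that all these operations respect the domain: since $1 + h = (1+f)^{\lambda} > 0$ and $(1+f)(1+g) > 0$, both $h$ and $f \oplus g$ remain in $X$, so no point escapes the region where $L$ is defined. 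Combining closure under addition and nonnegative scaling shows $L(\mathbb{D})$ contains all its conic combinations and is thus a convex cone.

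Finally, assuming the same regularity conditions invoked for Theorem~\ref{thm:rep}---that $L(\mathbb{D})$ is closed and has nonempty interior, properties transported from $\mathbb{D}$ by the continuity and strict monotonicity of $L$---I would apply the Representation Theorem directly in $V$. It furnishes a nonzero continuous linear functional $\ell$, unique up to a positive scalar, with $L(f) \in L(\mathbb{D}) \iff \ell(L(f)) \ge 0$; since $L$ is a bijection, $f \in \mathbb{D} \iff L(f) \in L(\mathbb{D}) \iff \ell(L(f)) \ge 0$, which is the desired representation with evaluation functional $\rho(f) = \ell(L(f))$. The main obstacle is the cone verification rather than the separation step: one must confirm that (F3) and (F4), stated through the nonlinear operations $\oplus$ and $u^{-1}(\lambda\, u(f) + \mu\, u(g))$, translate into genuine \emph{linear} closure in $V$. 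This hinges on the identity $L(f \oplus g) = L(f) + L(g)$ and on the compatibility of the (F3) utility with the log-return map; the argument is cleanest precisely when $u = \phi = L$, so that (F4) is recognized as the unit-weight instance of (F3) and the two axioms jointly deliver the convex-cone structure.
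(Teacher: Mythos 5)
Your proposal is correct and follows essentially the same route as the paper: transform $\mathbb{D}$ into the log-domain, use (F4) (via log-domain additivity) and (F3) with $u=L$ to show $L(\mathbb{D})$ is a convex cone, obtain the separating functional under the closedness/nonempty-interior regularity conditions, and transfer back via bijectivity of $L$. The only cosmetic difference is that you invoke Theorem~\ref{thm:rep} where the paper applies Hahn--Banach directly (the same underlying separation argument), and your explicit verification of positive homogeneity via the $\mu=0$ instance of (F3) actually spells out a step the paper only gestures at.
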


\begin{proof}
We proceed in three steps.

\medskip

\noindent {\em Step 1. Transformation and Convexity in the Log-Domain.} \\
Define the transformation \(L: X \to V\) by
\[
L(f) = \log(1+f),
\]
where \(V\) is an appropriate vector space of functions (for example, a subspace of \(\mathbb{R}^\Omega\)) in which the image \(L(X)\) is convex. By (F4), for any \(f, g \in \mathbb{D}\) we have
\[
f \oplus g = (1+f)(1+g)-1.
\]
Taking logarithms yields
\begin{align*}
L(f \oplus g) &= \log\bigl((1+f)(1+g)\bigr)\\
		   &= \log(1+f) + \log(1+g) \\
		   &= L(f) + L(g).
\end{align*}
Moreover, the scaling properties implied by (F3) (via the mapping \(u\), which in this case is replaced by the log-transformation) ensure that
\[
L(\mathbb{D}) := \{ L(f) : f \in \mathbb{D} \}
\]
forms a convex cone in \(V\).

\medskip

\noindent {\em Step 2. Application of the Hahn–Banach Theorem.} \\
Assume that \(L(\mathbb{D})\) has nonempty interior in \(V\),  which is a standard regularity condition. Then, by the Hahn–Banach separation theorem, there exists a nonzero continuous linear functional \(\ell: V \to \mathbb{R}\) (unique up to positive scaling) such that
\[
L(\mathbb{D}) = \{ v \in V : \ell(v) \ge 0 \}.
\]
That is, for any \(v \in V\),
\[
v \in L(\mathbb{D}) \quad \Longleftrightarrow \quad \ell(v) \ge 0.
\]

\smallskip
\noindent {\em Step 3. Equivalence of Acceptance in \(X\).} \\
Since the transformation \(L\) is bijective on its domain (its inverse being \(L^{-1}(v) = e^v-1\)), we obtain that for every gamble \(f\in X\),
\[
f \in \mathbb{D} \quad \Longleftrightarrow \quad L(f) \in L(\mathbb{D}) \quad \Longleftrightarrow \quad \ell\bigl(L(f)\bigr) \ge 0.
\]
\end{proof}

\noindent{\em Remarks:} The proof relies on representing the acceptance set $\mathbb{D}$  in the log-domain. A key assumption is that \(L(\mathbb{D})\) has nonempty interior, which allows the application of the Hahn–Banach theorem to obtain a continuous linear functional that separates the cone from its complement. The main point is that even though we replace standard addition with the nonlinear operator $\oplus$ in the original outcome space, the transformation $L$ turns the operation into standard addition, preserving convexity. Consequently, the evaluation of gambles via the continuous linear functional $\ell$ remains valid, ensuring that the function-coherent structure is preserved in the log-domain. This representation is critical when analyzing multiplicative dynamics and long-run growth in sequential decision problems.

%

From Theorem~\ref{thm:rep}, if $\ell$ corresponds to integration with respect to a probability measure $p$, we obtain
\[
f\in \mathbb{D} \quad \Longleftrightarrow \quad \mathbb{E}_p\bigl[L(f)\bigr] \ge 0,
\]
which directly connects acceptance to the expected log-return—precisely the criterion suggested by the ergodicity considerations of the previous section.

\subsection{Risk Assessment Under Sequential Dynamics}

The representation clarifies the relationship between risk and time-average growth. The evaluation functional
\[
\rho(f) = -\ell\bigl(L(f)\bigr), \quad \text{with } L(f)=\log(1+f),
\]
can be interpreted as measuring the risk of negative long-run growth. This provides a natural bridge between the multiplicative dynamics of wealth evolution and coherent risk measures developed in \cite{Wheeler:2021.isipta}.
In classical portfolio theory \cite{Elton:2006}, risk measures like Value-at-Risk (VaR) \cite{Jorion:2006} or Conditional Value-at-Risk (CVaR) \cite{Rockafellar:2000} focus primarily on the dispersion or tail behavior of returns, often neglecting the dynamic interplay of gains and losses over time. In contrast, our risk measure $\rho(f)$ 
captures the multiplicative dynamics inherent in sequential decision-making. By evaluating the negative expected log-return, $\rho(f)$ not only quantifies the magnitude of potential losses but also inherently accounts for the asymmetry between gains and losses:  for example, a 50\% loss requires a 100\% gain to recover.  This alignment with the long-run geometric growth rate provides a more robust assessment of risk in environments where volatility and compounding effects are critical, thereby addressing limitations of traditional additive risk measures.

These features emerge naturally from the mathematical structure rather than requiring ad hoc adjustments, providing a more principled approach to sequential decision-making under uncertainty.

\section{Illustrative Examples}

The theoretical framework developed in the preceding sections provides a rigorous foundation for handling multiplicative dynamics in sequential gambles. Here we present a series of examples that demonstrate both the practical utility of our approach and its advantages over traditional methods from \cite{Williams:1975,Walley:2000}.

\subsection{Basic Nonlinear Combination}

We begin with a simple example that illustrates how our combination operator differs from standard addition. Consider two positive-outcome gambles:
\[
f = 0.10 \quad \text{and} \quad g = 0.20
\]
representing 10\% and 20\% returns respectively. Under standard addition, these would combine to give a 30\% return. However, standard addition fails to capture the multiplicative nature of sequential returns.

Using our framework:
\begin{align*}
L(f) &= \log(1.1) \approx 0.0953\\
L(g) &= \log(1.2) \approx 0.1823
\end{align*}
The nonlinear combination yields:
\[
f \oplus g = (1.1)(1.2) - 1 = 0.32
\]
with log-return
\[
L(f \oplus g) = \log(1.32) \approx 0.2776
\]
This exactly equals $L(f) + L(g)$, demonstrating how $\oplus$ preserves additivity in the log-domain while capturing multiplicative growth in the outcome space.

\subsection{Mixed Outcomes and Growth Rates}

Following \cite{Wheeler:2021.isipta}, we now consider a more realistic gamble with both positive and negative outcomes:
\[
h = \begin{cases}
0.25 & \text{with probability } 0.5\\
-0.10 & \text{with probability } 0.5
\end{cases}
\]

The standard expected value is positive:
\[
\mathbb{E}[h] = 0.5(0.25) + 0.5(-0.10) = 0.075
\]
However, this masks the asymmetric impact of gains and losses under compounding. Computing the log-returns:
\begin{align*}
L(0.25) &= \log(1.25) \approx 0.2231\\
L(-0.10) &= \log(0.90) \approx -0.1054
\end{align*}
The expected log-return is
\[
\mathbb{E}[L(h)] \approx 0.5(0.2231) + 0.5(-0.1054) = 0.05885
\]

This lower value reflects a key insight from \cite{Wheeler:2022}: under multiplicative dynamics, the 25\% gain does not fully compensate for the 10\% loss. While traditional expected value analysis suggests strong positive returns, our framework reveals that the growth prospects are more modest due to the multiplicative interaction of gains and losses.

\subsection{Portfolio Management Application}

Building on \cite{Wheeler:2025.func}, we now examine a practical portfolio allocation problem. Consider two investment strategies with historical returns:
\[
\begin{aligned}
f &= \{0.08, -0.03, 0.12, 0.05, -0.02\}\\
g &= \{0.04, 0.03, 0.05, 0.04, 0.03\}
\end{aligned}
\]

Strategy $f$ has higher volatility but seemingly higher returns, while $g$ is more stable. Under traditional additive analysis from \cite{Markowitz:1959}:
\[
\mathbb{E}[f] = 0.04 \quad \text{and} \quad \mathbb{E}[g] = 0.038
\]
suggesting that $f$ is superior. However, our framework reveals a different picture:
\[
\begin{aligned}
\mathbb{E}[L(f)] &\approx 0.0362\\
\mathbb{E}[L(g)] &\approx 0.0379
\end{aligned}
\]
suggesting instead that $g$ is superior.

This reversal highlights three crucial insights:

\emph{Volatility Drag}: Strategy $f$'s higher arithmetic mean is more than offset by its higher volatility, a phenomenon that emerges naturally from our log-transformation.

\emph{Asymmetric Impact}: The negative returns in strategy $f$ are especially damaging because they must be overcome by proportionally larger positive returns to maintain the same growth rate, following \cite{Wheeler:2021.isipta}.

\emph{Time Horizon Effects}: As the investment horizon lengthens, the advantage of strategy $g$ becomes more pronounced. The compound growth rates over five periods are:
\[
\prod_{i=1}^5 (1 + f_i) \approx 1.195 \quad \text{versus} \quad \prod_{i=1}^5 (1 + g_i) \approx 1.205
\]

This example demonstrates how our framework automatically captures features that traditional expected value analysis misses. The nonlinear combination operator $\oplus$ accounts for both the compounding of returns and the asymmetric impact of gains and losses, providing more accurate assessment of long-term growth prospects.

\subsection{Sequential Decision Analysis}

Our final example illustrates how the framework handles longer sequences of decisions. Consider three successive gambles:
\[
f_1 = 0.05, \quad f_2 = -0.02, \quad f_3 = 0.10
\]
The sequential combination under our framework yields:
\[
f_1 \oplus f_2 \oplus f_3 = (1.05)(0.98)(1.10) - 1 \approx 0.1319
\]

Computing the log-returns:
\begin{align*}
L(f_1) &\approx 0.0488\\
L(f_2) &\approx -0.0202\\
L(f_3) &\approx 0.0953
\end{align*}
Their sum, approximately 0.1239, agrees with $L(f_1 \oplus f_2 \oplus f_3) = \log(1.1319)$. This demonstrates how our framework preserves the essential algebraic properties needed for coherent sequential decision-making while properly accounting for multiplicative dynamics.

\section{Generalized Combination Operators}

While the log-transformation-based operator $\oplus$ introduced earlier elegantly handles multiplicative dynamics, it represents just one member of a broader family of combination operators. The representation theorem from \cite{Wheeler:2025.func} suggests a natural generalization that encompasses a wide range of utility functions satisfying axioms (F1)--(F3).

In this section we generalize the combination operator to accommodate an arbitrary strictly increasing, continuous utility function 
\[
u: X \to \mathbb{R},
\]
with the normalization \(u(0)=0\). Here, $X$ denotes the domain of gambles for which $u$ is well defined. In order for the generalized operator to be well defined, we assume that for every $f\in X$, the value $u(f)$ lies in the range 
\[
R := u(X) \subseteq \mathbb{R},
\]
and we require that $R$ is closed under addition. In other words, for all $r_1, r_2\in R$, it is necessary that \(r_1 + r_2 \in R\) and the inverse $u^{-1}(r_1+r_2)$ is again in $X$.

\subsection{Properties of the Generalized Combination Operator}

We now state and prove the following theorem that establishes the key algebraic properties of $\oplus_u$.

\begin{theorem}[Properties of Generalized Combination]\label{thm:properties}
Let \(u: X \to \mathbb{R}\) be a strictly increasing, continuous function with $u(0)=0$ and assume that $R=u(X)$ is closed under addition (so that $u^{-1}(r_1+r_2)$ is defined for all $r_1,r_2\in R$). Then the operator 
\[
\oplus_u: X\times X \to X,\quad (f,g) \mapsto u^{-1}\bigl(u(f)+u(g)\bigr),
\]
satisfies:
\begin{enumerate}
    \item \textbf{Associativity:} \((f \oplus_u g) \oplus_u h = f \oplus_u (g \oplus_u h)\) for all \(f,g,h \in X\).
    \item \textbf{Commutativity:} \(f \oplus_u g = g \oplus_u f\) for all \(f,g \in X\).
    \item \textbf{Identity:} \(f \oplus_u 0 = f\) for all \(f \in X\), where \(0\) denotes the unique element of \(X\) satisfying \(u(0)=0\).
    \item \textbf{Monotonicity:} If \(f, g, h \in X\) and \(f \ge g\) (pointwise, or in the sense that \(u(f)\ge u(g)\) since \(u\) is strictly increasing), then
    \[
    f \oplus_u h \ge g \oplus_u h.
    \]
\end{enumerate}
\end{theorem}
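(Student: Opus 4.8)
The plan is to reduce all four properties to the single defining identity $u(f \oplus_u g) = u(f) + u(g)$, which holds because $u \circ u^{-1}$ is the identity on $R$, and then transport the corresponding property of real addition back through the strictly increasing, hence injective, map $u$. First I would record this fundamental relation explicitly, together with the observation that the hypothesis that $R = u(X)$ is closed under addition guarantees $u(f) + u(g) \in R$, so that $u^{-1}\bigl(u(f)+u(g)\bigr)$ lands in $X$ and $f \oplus_u g$ is genuinely an element of $X$.

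With that in hand, \emph{commutativity} follows by applying $u$ to both sides: $u(f \oplus_u g) = u(f) + u(g) = u(g) + u(f) = u(g \oplus_u f)$, and injectivity of $u$ gives $f \oplus_u g = g \oplus_u f$. \emph{Identity} is equally immediate: since $u(0)=0$, we get $u(f \oplus_u 0) = u(f) + 0 = u(f)$, whence $f \oplus_u 0 = f$. For \emph{monotonicity}, assuming $f \ge g$ I would use strict monotonicity of $u$ to obtain $u(f) \ge u(g)$, add $u(h)$ to both sides, and then apply $u^{-1}$, which is also strictly increasing, to recover $f \oplus_u h \ge g \oplus_u h$.

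\emph{Associativity} is the one step requiring slightly more care. Applying $u$ to the left-hand side and using the fundamental relation twice gives $u\bigl((f \oplus_u g) \oplus_u h\bigr) = u(f \oplus_u g) + u(h) = \bigl(u(f)+u(g)\bigr) + u(h)$; the right-hand side evaluates similarly to $u(f) + \bigl(u(g)+u(h)\bigr)$, and the two agree by associativity of addition in $\mathbb{R}$. Injectivity of $u$ then closes the argument.

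The main obstacle I anticipate is not the algebra but the \emph{well-definedness} bookkeeping: for the nested expression $(f \oplus_u g) \oplus_u h$ to make sense, each intermediate combination must remain inside $X$. This is where closure of $R$ under addition does the real work—applied once it places $u(f)+u(g)$ in $R$, and applied a second time it places $\bigl(u(f)+u(g)\bigr)+u(h)$ in $R$, so that $u^{-1}$ of the accumulated sum is defined at every stage. I would state this closure check once at the outset, after which the four verifications each reduce to a one-line transport of a real-arithmetic identity or inequality through $u$ and $u^{-1}$.
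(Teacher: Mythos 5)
Your proposal is correct and follows essentially the same route as the paper's proof: establish the defining identity $u(f \oplus_u g) = u(f) + u(g)$, transport commutativity, associativity, and the identity element from real addition back through the injective map $u$, and handle monotonicity by adding $u(h)$ and applying the order-preserving inverse $u^{-1}$. The one place you go beyond the paper is the explicit well-definedness check that closure of $R$ under addition keeps nested combinations like $(f \oplus_u g) \oplus_u h$ inside $X$ --- a point the paper relegates to its hypothesis without verifying it in the proof, so your bookkeeping is a modest improvement rather than a different argument.
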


\begin{proof}
We prove each property in turn.

\textbf{(1) Associativity:}  
For any \(f,g,h\in X\), note that by definition
\[
u(f \oplus_u g) = u\Bigl(u^{-1}\bigl(u(f)+u(g)\bigr)\Bigr) = u(f) + u(g).
\]
Then,
\[
\begin{aligned}
u\Bigl((f \oplus_u g) \oplus_u h\Bigr)
&= u(f \oplus_u g) + u(h) \\
&= \bigl(u(f)+u(g)\bigr) + u(h) \\
&= u(f) + \bigl(u(g)+u(h)\bigr) \\
&= u(f) + u(g \oplus_u h) \\
&= u\Bigl(f \oplus_u (g \oplus_u h)\Bigr).
\end{aligned}
\]
Since \(u\) is strictly increasing and hence invertible, it follows that
\[
(f \oplus_u g) \oplus_u h = f \oplus_u (g \oplus_u h).
\]

\bigskip

\textbf{(2) Commutativity:}  
For any \(f, g\in X\),
\[
u(f \oplus_u g) = u(f) + u(g) = u(g) + u(f) = u(g \oplus_u f).
\]
Again, applying \(u^{-1}\) yields
\[
f \oplus_u g = g \oplus_u f.
\]

\bigskip

\textbf{(3) Identity:}  
Let \(0 \in X\) denote the neutral element such that \(u(0)=0\). Then, for any \(f \in X\),
\[
u(f \oplus_u 0) = u(f) + u(0) = u(f) + 0 = u(f).
\]
Applying \(u^{-1}\) shows that
\[
f \oplus_u 0 = f.
\]

\bigskip

\textbf{(4) Monotonicity:}  
Assume that \(f, g, h \in X\) and \(f \ge g\). Since \(u\) is strictly increasing, it follows that
\[
u(f) \ge u(g).
\]
Then, adding \(u(h)\) (which does not affect the inequality) gives
\[
u(f) + u(h) \ge u(g) + u(h).
\]
Applying the inverse \(u^{-1}\) (which preserves the order, again because \(u\) is strictly increasing) yields
\[
u^{-1}\bigl(u(f)+u(h)\bigr) \ge u^{-1}\bigl(u(g)+u(h)\bigr),
\]
that is,
\[
f \oplus_u h \ge g \oplus_u h.
\]
This completes the proof.
\end{proof}

\subsection{Characterization of Well-Behaved Combination Operators}

Next, we provide a necessary and sufficient condition for the operator \(\oplus_u\) to be well behaved in the sense of being closed on its domain and preserving the ordering of gambles.

\begin{theorem}[Characterization of Well-Behaved Combination]\label{thm:characterization}
Let \(u: X \to \mathbb{R}\) be a strictly increasing, continuous utility function with \(u(0)=0\). Then the generalized combination operator
\[
f \oplus_u g = u^{-1}\bigl(u(f)+u(g)\bigr)
\]
is well behaved (that is, it is closed on \(X\), preserves ordering, is associative, and has a neutral element) if and only if the following conditions hold:
\begin{enumerate}
    \item[(i)] \(u\) is surjective onto its range \(R\subseteq \mathbb{R}\).
    \item[(ii)] The range \(R\) is closed under addition: For all \(r_1, r_2\in R\), we have \(r_1 + r_2 \in R\).
    \item[(iii)] For all \(r_1, r_2 \in R\), the inverse \(u^{-1}(r_1+r_2)\) belongs to \(X\).
\end{enumerate}
\end{theorem}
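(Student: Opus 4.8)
The statement is a biconditional, and my plan is to prove the two implications separately. The observation that organizes everything is that, among the four requirements bundled into ``well-behaved,'' only \emph{closure} on $X$ carries genuine content here: once $\oplus_u$ is shown to be a well-defined map $X\times X\to X$, associativity, commutativity, the existence of the neutral element $0$ (with $u(0)=0$), and order preservation follow immediately from Theorem~\ref{thm:properties}, whose hypotheses are precisely that $R$ is closed under addition and that $u^{-1}(r_1+r_2)\in X$. I would also note at the outset that condition~(i) is essentially definitional: since $R:=u(X)$ is the range and $u$ is strictly increasing, hence injective, the map $u:X\to R$ is a bijection, so $u^{-1}:R\to X$ is a genuine function and surjectivity onto $R$ is automatic. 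Thus the substantive task reduces to showing that closure of $\oplus_u$ is equivalent to conditions~(ii) and~(iii).

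For sufficiency ($\Leftarrow$) I would assume (i)--(iii) and verify closure directly. Given $f,g\in X$, set $r_1=u(f)$ and $r_2=u(g)$, both in $R$. Condition~(ii) gives $r_1+r_2\in R$, so that $u^{-1}(r_1+r_2)$ lies in the domain of $u^{-1}$ and is therefore defined; condition~(iii) then places it in $X$. Hence $f\oplus_u g=u^{-1}\bigl(u(f)+u(g)\bigr)\in X$, which is closure. With the operator now well-defined and closed, I would invoke Theorem~\ref{thm:properties} to supply the remaining properties---associativity, commutativity, identity, and monotonicity (order preservation)---so that $\oplus_u$ is well-behaved.

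For necessity ($\Rightarrow$) I would assume $\oplus_u$ is well-behaved and read (ii) and (iii) off the closure requirement, with (i) immediate as above. For (ii), take arbitrary $r_1,r_2\in R$; surjectivity onto $R$ provides $f,g\in X$ with $u(f)=r_1$ and $u(g)=r_2$. For $f\oplus_u g=u^{-1}(r_1+r_2)$ to be a well-defined element of $X$---which well-behavedness demands---the argument $r_1+r_2$ must lie in the domain $R$ of $u^{-1}$, giving $r_1+r_2\in R$. Condition~(iii) then follows from closure, since the resulting element $u^{-1}(r_1+r_2)$ must belong to $X$.

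The main obstacle, such as it is, is conceptual rather than computational: I must carefully separate the two distinct demands hidden inside ``closure.'' Closure requires both that the sum $u(f)+u(g)$ remain in the domain $R$ of $u^{-1}$ (condition~(ii)) and that its inverse image remain in $X$ (condition~(iii)); conflating these would make the characterization appear either circular or incomplete. The complementary subtlety is to confirm that the algebraic axioms impose no constraints beyond closure, so that they can be discharged wholesale by Theorem~\ref{thm:properties} rather than re-derived---this is exactly what keeps the condition list at (i)--(iii) and not longer.
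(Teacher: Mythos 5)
Your proposal is correct and follows essentially the same route as the paper's own proof: both directions reduce the substantive content to closure of $\oplus_u$ on $X$, derive (ii) and (iii) from that closure requirement (with (i) noted as automatic since $R=u(X)$ is the range), and discharge associativity, commutativity, identity, and monotonicity by appeal to Theorem~\ref{thm:properties}. Your write-up is in fact slightly more careful than the paper's in the necessity direction, where you explicitly invoke surjectivity to realize arbitrary $r_1,r_2\in R$ as $u(f),u(g)$ before applying closure.
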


\begin{proof}
We prove the equivalence by showing both directions.

\medskip
\noindent{\em (Only if):} Suppose that \(\oplus_u\) is well behaved. Then by definition, for any \(f, g\in X\), the sum \(u(f)+u(g)\) must lie in the set \(R\) so that \(u^{-1}\bigl(u(f)+u(g)\bigr)\) is defined and lies in \(X\). This immediately implies that:
\begin{enumerate}
    \item[(ii)] For any \(r_1 = u(f)\) and \(r_2 = u(g)\) with \(f,g \in X\), we have \(r_1+r_2 \in R\). Hence, \(R\) is closed under addition.
    \item[(iii)] The closure of \(X\) under \(\oplus_u\) means that \(u^{-1}(r_1+r_2) \in X\) for all \(r_1, r_2 \in R\).
\end{enumerate}
Moreover, since \(u\) is a function from \(X\) onto its range \(R\), by definition it is surjective onto \(R\); hence, (i) holds.

\medskip

\noindent {\em (If):} Conversely, assume that conditions (i)–(iii) hold. Then for any \(f,g\in X\), we have \(u(f)\in R\) and \(u(g)\in R\). By (ii), their sum \(u(f)+u(g) \in R\), and by (iii), the inverse \(u^{-1}(u(f)+u(g))\) is an element of \(X\). Therefore, \(\oplus_u\) is closed on \(X\). The proofs of associativity, commutativity, identity, and monotonicity (given in Theorem~\ref{thm:properties}) rely solely on the properties of \(u\) being strictly increasing and continuous and do not require additional assumptions. Thus, under (i)–(iii) the operator \(\oplus_u\) is well behaved.
\end{proof}

\subsection{Important Classes of Operators}

Following \cite{Wheeler:2025.func}, several important classes of well-behaved utility functions emerge:

1. \emph{Power Utilities}: For $\gamma \neq 0$,
\[
u_\gamma(x) = \begin{cases}
\frac{x^\gamma}{\gamma}, & x \geq 0\\
-\infty, & x < 0
\end{cases}
\]
Leading to the combination operator:
\[
f \oplus_{u_\gamma} g = \left(f^\gamma + g^\gamma\right)^{1/\gamma}
\]

2. \emph{Exponential Utilities}: For $\alpha > 0$, as studied by \cite{Arrow:1965}:
\[
u_\alpha(x) = 1 - e^{-\alpha x}
\]
With combination operator:
\[
f \oplus_{u_\alpha} g = -\frac{1}{\alpha}\log\left(e^{-\alpha f} + e^{-\alpha g} - 1\right)
\]

3. \emph{Logarithmic Utility}: Our previous case from \cite{Wheeler:2021.isipta}:
\[
u(x) = \log(1+x)
\]
With combination operator:
\[
f \oplus g = (1+f)(1+g) - 1
\]

\subsection{Risk Measurement and Dynamic Properties}

Each class of operators induces its own risk measure through the representation theorem. Following \cite{Wheeler:2025.func}:

\begin{definition}[Induced Risk Measure]
For a utility function $u$ with well-behaved combination operator $\oplus_u$, the induced risk measure is:
\[
\rho_u(f) := -\ell(u(f))
\]
where $\ell$ is the linear functional from Theorem~\ref{thm:rep}.
\end{definition}

These risk measures exhibit systematically different properties:

1. \emph{Power Utility Risk Measures} ($\gamma \in (0,1)$):
\[
\rho_\gamma(f) = -\mathbb{E}\left[\frac{f^\gamma}{\gamma}\right]
\]
exhibits decreasing relative risk aversion.  Under power utility, an agent's risk aversion decreases as wealth increases  \cite{Merton:1971,Acerbi:2002}.

2. \emph{Exponential Risk Measures}:
\[
\rho_\alpha(f) = \frac{1}{\alpha}\log\mathbb{E}[e^{-\alpha f}]
\]
exhibits constant absolute risk aversion, recovering the entropic risk measure. Under exponential risk utility, an agent's risk aversion is absolute regardless of wealth \cite{Foellmer:2002}.

3. \emph{Logarithmic Risk Measures}:
\[
\rho_{\log}(f) = -\mathbb{E}[\log(1+f)]
\]
exhibits constant relative risk aversion and naturally captures multiplicative risks.  Like power utility, logarithmic risk aversion is proportional.  Unlike power utility, logarithmic risk captures proportional multiplicative risk (i.e., compounding), which is a property of gambles rather than a psychological appetite for risk  \cite{Kelly:1956,Peters:2019,Wheeler:2018-sep}.

The logarithmic case is unique in simultaneously preserving 
\begin{itemize}
\item the natural scaling of multiplicative processes, 
\item additivity of sequential risks in the appropriate domain, and 
\item the connection to time-average growth rates.  
\end{itemize}
This explains its emergence as particularly relevant for long-run growth optimization \cite{MaslovZhang:1998}, while highlighting how other utility functions might be more appropriate when different risk characteristics are primary concerns.

\section{Conclusion}

This paper makes three  contributions to the theory of imprecise probability and sequential decision making. First, we developed a novel combination operator that preserves coherence while accommodating non-linear utility. Our motivation stemmed from a fundamental limitation in the standard desirable gambles framework identified in \cite{Wheeler:2021.isipta}, namely, the inability to properly handle sequential decision problems with multiplicative dynamics. While the standard additive combination axiom (A4) from \cite{Williams:1975,LowerPrevisions} is  mathematically convenient, it fails to capture compound growth effects that characterize many real-world processes, particularly in long-horizon decisions studied in \cite{Wheeler:2021.isipta}.

Second, we established necessary and sufficient conditions for well-behaved combination operators through our representation theorem. The characterization revealed that the logarithmic transformation $f \oplus g = (1+f)(1+g)-1$ is not just one among many possible operators, but emerges naturally as the unique operator that simultaneously preserves function coherence in the transformed space, the time-average geometric growth rate, and additivity of sequential risks in the log-domain.

Third, our framework unifies several seemingly disparate concepts into a coherent whole. We show how the ergodicity problem in multiplicative dynamics, function-coherent representations of risk preferences, time-average growth optimization, and non-stationary reward processes are intimately connected through the structure of the combination operator and its induced risk measure. Building on \cite{Wheeler:2025.func}, this unification provides new insights into the relationships between these previously separate areas of study.

The practical implications of this work extend beyond theoretical unification. Our examples in portfolio management demonstrate how the framework naturally captures phenomena like volatility drag and the asymmetric impact of gains and losses—effects that are often treated ad hoc in traditional approaches. The connection to coherent risk measures provides new tools for risk assessment in dynamic settings, particularly relevant for long-horizon decision problems where compound growth effects dominate.


\printbibliography

\end{document}